\newcolumntype{C}{>{$}c<{$}} 
\DeclareMathOperator{\order}{order}
\newcommand{\code}{\ensuremath{\mathcal{C}}}
\newcommand{\F}{\ensuremath{\mathbb{F}}}
\newcommand{\defset}{\ensuremath{\mathcal{D}}}
\newlength\mylen
\newcolumntype{P}{>{\hfil$}p{\mylen}<{$\hfil}} 
\begin{document}
\title{Cyclic Codes with Locality and Availability\thanks{The work of L. Holzbaur, R. Freij-Hollanti, and A. Wachter-Zeh was supported by the Technical University of Munich -- Institute for Advanced Study, funded by the German Excellence Initiative and European Union 7th Framework Programme under Grant Agreement No. 291763 and the German Research Foundation (Deutsche Forschungsgemeinschaft, DFG) under Grant No. WA3907/1-1. {This work was partly done while R. Freij-Hollanti was with TUM.}}}
\author{Lukas Holzbaur, Ragnar Freij-Hollanti, Antonia Wachter-Zeh}

%
%
\author{Lukas Holzbaur\inst{1} \and 
Ragnar Freij-Hollanti\inst{2} \and
Antonia Wachter-Zeh\inst{1}}
%
%
\institute{Insitute for Communications Engineering, Technical University of Munich, Germany\
  \email{\{lukas.holzbaur, antonia.wachter-zeh\}@tum.de}
  \and
Department of Mathematics and Systems Analysis, Aalto University, Finland\
\email{ragnar.freij-hollanti@aalto.fi}}
\maketitle              
\begin{abstract}
In this work codes with availability are constructed based on the cyclic \emph{locally repairable code} (LRC) construction by Tamo et al. and their extension to $(r,\rho)$-locality by Chen et al. The minimum distance of these codes is increased by carefully extending their defining set. We give a bound on the dimension of LRCs with availability and orthogonal repair sets and show that the given construction is optimal for a range of parameters.

\keywords{Locally Repairable Codes  \and Cyclic Codes \and Availability.}
\end{abstract}

\section{Introduction}

Modern databases require coding solutions that offer a trade-off between storage overhead, efficient repair and accessibility. While it is known that maximum distance separable (MDS) code offer the lowest possible storage overhead, they do not perform well with regard to the remaining requirements. In recent years several approaches have been taken to address these issues, one of which being the concept of $r$-locality~\cite{Gopalan2012,Tamo2014,Tamo2016cyc,Cadambe2013,Rawat2016,Tamo2016}, where the number of nodes required to repair a failed node is limited to $r$. In order for multiple simultaneous failures to be repairable, this was extended to \emph{$(r,\rho)$-locality}~\cite{Kamath2014,Tamo2014,Chen2018}, where the code restricted to a subset of positions gives a code of distance at least $\rho$. While these \emph{locally repairable codes} (LRC) address the issue of efficient repair, locality only has a limited effect on accessibility, \emph{i.e.}, situations where the same symbol is requested multiple times simultaneously. In the replicated storage case, each node could be used to serve any request, which is not possible in coded databases. To mitigate this problem \emph{availability} is introduced~\cite{Pamies-Juarez2013,Tamo2014,Tamo2016,Rawat2016,Huang2015}, which requires every symbol in the code to have not only one, but multiple (disjoint) recovery sets. Several bounds on the distance~\cite{Gopalan2012,Kamath2014,Huang2015,Rawat2016,Tamo2016}, rate~\cite{Pamies-Juarez2013,Tamo2016} and dimension~\cite{Cadambe2013} have been derived, both alphabet-dependent and Singleton-like, showing the fundamental limits of such codes. In this work we present a new code construction with locality and availability, based on the cyclic representation of \emph{Tamo-Barg} LRCs~\cite{Tamo2014} given in~\cite{Tamo2016cyc} and their generalization to $(r,\rho)$-locality of~\cite{Chen2018}. We show how the Hartmann-Tzeng bound~\cite{Hartmann1972} can be used to obtain lower bounds on the minimum distance of the local and global codes. Further we derive a bound on the dimension and, for a selection of parameters, compare it to the dimension achieved by our construction.

\section{Preliminaries}
We denote by $[a,b]$ the set of integer $\{i, a\leq i \leq b\}$ and abbreviate it by $[b]$ if $a=0$. A linear code $\code$ of length $n$ and dimension $k$ is denoted by $[n,k]$.

\subsection{Cyclic Codes}

Cyclic codes are a class of linear codes with the property that every cyclic shift of a codeword is also a codeword. Formally an $[n,k]$ cyclic code $\code$ is an ideal in the ring $\F_q[x] / [x^n-1]$ generated by a \emph{generating polynomial} $g(x)$, with $g(x) | x^n-1$. The \emph{defining set} of the code is 
\begin{equation*}
  \mathcal{D}_\code = \{\alpha^i | g(\alpha^i) = 0 \},
\end{equation*}
where $\alpha$ is an $n$-th root of unity from the splitting field $\F_{q^s}$ and $n|(q^s-1)$. Several lower bounds on the distance of cyclic codes have been derived based on this defining set, such as the BCH and the Hartmann-Tzeng bound.
\begin{theorem}[Hartmann-Tzeng bound~\cite{Hartmann1972}] \label{thm:hartmannTzeng}
  Let $\code$ be an $[n,k]$ cyclic code of distance $d$ with defining set $\defset = \{\alpha^{u+s_1z_1+s_2z_2}, 0\leq s_1 \leq \delta-2, 0\leq s_2 \leq \gamma\}$, where $\gcd(n,z_1) = \gcd(n,z_2)=1$. Then $d\geq \delta+\gamma$.
\end{theorem}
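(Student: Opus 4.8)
The plan is to show directly that every nonzero codeword $c\in\code$ has Hamming weight $w\geq\delta+\gamma$; since $d$ is the minimum weight of a nonzero codeword, this gives the claim. Let $I=\{i : c_i\neq 0\}$ be the support of $c$, $w=|I|$, and set $A_i=c_i\alpha^{iu}$, $x_i=\alpha^{iz_1}$, $y_i=\alpha^{iz_2}$ for $i\in I$. Because $\gcd(n,z_1)=\gcd(n,z_2)=1$, multiplication by $z_1$ (respectively $z_2$) permutes $[n-1]$, so $(x_i)_{i\in I}$ and $(y_i)_{i\in I}$ are each families of $w$ pairwise distinct $n$-th roots of unity in $\F_{q^s}$, while the $A_i$ are nonzero. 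Since $c\in\code$ vanishes on $\defset$, evaluating $c(x)=\sum_i c_ix^i$ at the roots $\alpha^{u+s_1z_1+s_2z_2}$ yields
\begin{equation*}
\sum_{i\in I} A_i\, x_i^{s_1}\, y_i^{s_2}=0 ,\qquad 0\leq s_1\leq\delta-2,\quad 0\leq s_2\leq\gamma .
\end{equation*}

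The first step is to bootstrap two preliminary bounds from the Vandermonde structure. Fixing $s_2=0$ gives the $\delta-1$ relations $\sum_{i\in I}A_ix_i^{s_1}=0$; if $w\leq\delta-1$, the associated $(\delta-1)\times w$ matrix $[x_i^{s_1}]$ has full column rank because the $x_i$ are distinct, forcing $A=0$, a contradiction. Hence $w\geq\delta$. Symmetrically, fixing $s_1=0$ uses the $\gamma+1$ relations $\sum_{i\in I}A_iy_i^{s_2}=0$ and gives $w\geq\gamma+2$; in particular $w\geq\gamma+1$.

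The heart of the argument re-reads the full system in matrix form. Let $V=[x_i^{s_1}]_{\,0\le s_1\le\delta-2,\ i\in I}$, a $(\delta-1)\times w$ matrix; since $w\geq\delta-1$ and the $x_i$ are distinct, $V$ has rank $\delta-1$, so $\dim\ker V=w-(\delta-1)$. Let $D=\mathrm{diag}(y_i)_{i\in I}$, an invertible $w\times w$ diagonal matrix, and $A=(A_i)_{i\in I}$. For each fixed $s_2$ the displayed identities say exactly $V(D^{s_2}A)=0$, i.e.\ $A,DA,D^2A,\dots,D^\gamma A\in\ker V$. These $\gamma+1$ vectors are linearly independent: the matrix having them as columns equals $\mathrm{diag}(A_i)\cdot[y_i^{s_2}]_{i\in I,\,0\le s_2\le\gamma}$, the product of an invertible matrix with a Vandermonde matrix of rank $\min(w,\gamma+1)=\gamma+1$ (distinct $y_i$, and $w\geq\gamma+1$). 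Hence $\dim\ker V\geq\gamma+1$, that is $w-(\delta-1)\geq\gamma+1$, i.e.\ $w\geq\delta+\gamma$, as required.

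The only step that requires care is the linear independence of $A,DA,\dots,D^\gamma A$, and this is precisely where both hypotheses $\gcd(n,z_1)=\gcd(n,z_2)=1$ enter (to make the $x_i$ and the $y_i$ distinct) together with the preliminary bound $w\geq\gamma+1$ (so that the Vandermonde in the $y_i$ is tall enough to have full column rank); everything else is routine linear algebra, so no real obstacle is expected. Alternatively, one may simply invoke the determinant computation in the original proof of Hartmann and Tzeng~\cite{Hartmann1972}.
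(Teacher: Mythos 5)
Your proof is correct. Note that the paper itself does not prove this statement at all: it is quoted as a known result with a citation to Hartmann and Tzeng, so there is no ``paper proof'' to compare against. Your argument is a clean, self-contained version of the standard one: writing $c(\alpha^{u+s_1z_1+s_2z_2})=\sum_{i\in I}A_i x_i^{s_1}y_i^{s_2}=0$ with $A_i=c_i\alpha^{iu}$, $x_i=\alpha^{iz_1}$, $y_i=\alpha^{iz_2}$, using $\gcd(n,z_1)=\gcd(n,z_2)=1$ to get distinct $x_i$ and distinct $y_i$, establishing $w\geq\delta$ and $w\geq\gamma+1$ by BCH-type Vandermonde arguments, and then observing that $A,DA,\dots,D^{\gamma}A$ are $\gamma+1$ linearly independent vectors in $\ker V$, where $\operatorname{rank}V=\delta-1$, so $w-(\delta-1)\geq\gamma+1$. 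Each step checks out: $V$ has rank $\delta-1$ since any $\delta-1$ of its columns form a nonsingular Vandermonde, the independence of the shifted vectors follows from factoring the matrix as $\operatorname{diag}(A_i)$ times a tall Vandermonde in the distinct $y_i$ (which needs exactly your preliminary bound $w\geq\gamma+1$), and rank--nullity closes the argument. This is essentially the rank/shifting argument behind the original Hartmann--Tzeng proof, so invoking their determinant computation, as you suggest at the end, is an equivalent alternative. Two cosmetic remarks only: the hypothesis $\gcd(n,z_2)=1$ as stated in the paper is stronger than the classical requirement (the classical bound allows $\gcd(n,z_2)<\delta$), and your proof correctly uses only that the listed roots lie in the defining set (the theorem's ``$\defset=\{\dots\}$'' should really be containment), with the implicit standing assumption $\delta\geq 2$.
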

The BCH bound is the special case of the Hartmann-Tzeng bound where $\gamma=0$.

\subsection{Locally Repairable Codes}

Locality is a code property that determines the maximal number of symbols required to correct a single erasure. In this work we consider the more general concept of $(r,\rho)$-locality.
\begin{definition}[$(r,\rho)$-locality]\label{def:rrhoLocality}
  Let $\mathcal{A}$ be a partition of $[n-1]$ into sets $A_i$ with $A_i\leq r+\rho-1$. Then a code $\code$ is said to have $(r,\rho)$-locality if for the distance of the code restricted to the positions indexed by $A_i$ it holds that $d(\code|_{A_i}) \geq \rho, \;\forall \, i$.
\end{definition}
As shown in~\cite{Kamath2014}, the distance $d$ of an $[n,k]$ code with $(r,\rho)$-locality is upper bounded by
\begin{equation}\label{eq:SingletonLikeBound}
  d\leq n-k+1 -\left(\left\lceil\frac{k}{r}\right\rceil-1\right)(\rho-1).
\end{equation}
If every symbol has multiple repair sets, this is referred to as availability. Usually it is assumed that all these repair sets are of the same parameters $r$ and $\rho$. We generalize this concept to allow different parameters for the repair sets.
\begin{definition}[$(r_i,\rho_i)$-$t$-Availability] \label{def:rirhoiAvailability}
  Let $\code$ be an $[n,k]$ code. Denote by $\mathcal{A}_i, 1\leq i \leq t$ partitions of $[n-1]$ into sets $A_{i,j}$ with $|A_{i,j}| \leq r_i + \rho_i+1$, such that for every $l \in [n]$ there are indices $j_1,...,j_t$ with
  \begin{equation*}
    \{l\} \in \bigcap\limits_{i=1}^t A_{i,j_i} 
  \end{equation*}
  and
  \begin{equation*}
    d\left(\right.\code\left|_{A_{i,j_i}} \right) \geq \rho_i .
  \end{equation*}
\end{definition}
Note that for $t=1$ this reduces to Definition~\ref{def:rrhoLocality}. In~\cite{Tamo2014}, the partitions are defined to be pairwise orthogonal, \emph{i.e.}, they only intersect in one position. We define a stronger notion of orthogonality and refer to it as \emph{strong orthogonality}.

\begin{definition}[Strong Orthogonality]\label{def:strongOrthogonality}
  Define $t$ partitions of $[n=\prod_{i=1}^t n_i -1]$ given by $\mathcal{A}_i = \{A_{i,1},...,A_{i,\frac{n}{n_i}}\}$, with $|A_{i,j}| = n_i$, as strongly orthogonal if there is an bijection 
  \begin{equation*}
    [n-1] \to \mathbb{Z}/n_1 \times ... \times \mathbb{Z}/n_t
  \end{equation*}
  such that for all $x \in A_{i,j}$
  \begin{equation*}
    x \mapsto (a_1,...,a_i(x),...,a_t) 
  \end{equation*}
  where $a_1,...,a_{i-1},a_{i+1},...,a_t$ are independent of $x$.
\end{definition}
We say that a code with $(r_i,\rho_i)$-$t$-availability and partitioning that fulfills Definition~\ref{def:strongOrthogonality} has \emph{strong $(r_i,\rho_i)$-$t$-availability}. Note that this map corresponds to arranging the symbols of a the repair sets in an LRC on the lines of a $t$-dimensional product code. Take ,\emph{e.g.,} the three partitions $\mathcal{A}_1=\{\{0,1\},\{2,3\},\{4,5\},\{6,7\}\}$, $\mathcal{A}_2 = \{\{0,2\},\{1,3\},\{4,6\},\{5,7\}\}$ and $\mathcal{A}_3 = \{\{0,4\},\{1,5\},\{2,6\},\{3,7\}\}$. Then a valid map would be given by the binary representation of each element, \emph{i.e.}, if $x=x_0+2x_1+4x_2, x_i \in \F_2$ then $x\mapsto (x_0,x_1,x_2)$. 

The construction introduced in the following section is based on Tamo-Barg LRCs~\cite{Tamo2014,Tamo2016cyc}. We focus on cyclic Tamo-Barg codes, and therefore recap the representation given in~\cite{Tamo2016cyc}. The generalization to cyclic codes with $(r,\rho)$-locality of~\cite{Chen2018} is given together with the code construction in Section~\ref{sec:codeConstruction}. In \cite{Tamo2016cyc} it was shown that a $q$-ary LRC is obtained from the defining set
\begin{equation*}
\mathcal{D} = \{\alpha^i, i\mod (r+1) = j\} \cup \{\alpha^{l+sb}, s=0,...,n-\frac{k}{r}(r+1)\} \ ,
\end{equation*}
where $l \mod (r+1) = j$, $\alpha$ is an $n$-th root of unity, and $(n,b)=1$. It was shown that this code has $(r,2)$ locality and meets the Singleton-like bound~\cite{Gopalan2012,Kamath2014} on the distance of $q$-ary LRCs of length $n$ and dimension $k$, \emph{i.e.}, is an optimal $[n,k]$ LRC. In the following we will consider cyclic Tamo-Barg codes with $(r,\rho)$-locality and availability. As shown in~\cite{Tamo2014}, these properties can also be obtained by a construction based on the monomials of the evaluation polynomial. For these codes the minimum distance is given by the minimum distance of the Reed-Solomon containing the LRC.

\section{Code Construction} \label{sec:codeConstruction}

In this section we introduce the code construction considered in this work. It is a combination of the construction of Tamo-Barg codes with availability and $(r,\rho)$-locality given in~\cite{Tamo2014} their cyclic representation as introduced in~\cite{Tamo2016cyc} and generalized in~\cite{Chen2018}.

\begin{definition} \label{def:codeconstruction}
  Let $n_1,...,n_t$ be integers with $n_i\geq 2$ and $\gcd(n_i,n_j) = 1$. Let $\rho_1,...,\rho_t$ be integers with $2 \leq \rho_i \leq n_i$. Let $n= \prod_{i=1}^{t} n_i$ and
  \begin{equation*}
    \defset_i = \left\{ \alpha^{j n_i + sb_i + l} , 0\leq j \leq\frac{n}{n_i}-1, 0\leq s \leq \rho_i-2 \right\},
  \end{equation*}
  where $(n_i,b_i) = 1$. Define the $q$-ary cyclic code $\code$ by its defining set
  \begin{align*}
    \mathcal{D} = \bigcup\limits_{i=1}^{t} \mathcal{D}_{i} \bigcup \defset_g 
  \end{align*}
  where $\defset_g \subseteq [n-1]$.
\end{definition}

Similar to the construction in~\cite{Tamo2016cyc}, the sets $\defset_i$ give the locality, while the set $\defset_g$ is used to increase the global distance of the code. 
The following expands Lemma~3.3 from~\cite{Tamo2016cyc} by showing that the code given in Definition~\ref{def:codeconstruction} has strong $(n_i-\rho_i+1,\rho_i)$-$t$-availability. 
\begin{theorem}\label{thm:cyclicAvailability}
  The cyclic code $\code$ from Definition~\ref{def:codeconstruction} has strong $(n_i-\rho_i+1,\rho_i)$-$t$-availability and distance $d \geq \prod_{i=1}^t \rho_i$.
\end{theorem}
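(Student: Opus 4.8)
The plan is to prove the two assertions separately: first the strong $(n_i-\rho_i+1,\rho_i)$-$t$-availability, then the distance bound $d \geq \prod_{i=1}^t \rho_i$.

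\medskip

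\noindent\textbf{Availability via the local defining sets.} For each fixed $i$, I would show that the partition $\mathcal{A}_i$ whose blocks $A_{i,j}$ are the residue classes modulo $n_i$ (shifted appropriately), each of size $n_i$, yields a restricted code $\code|_{A_{i,j}}$ of distance at least $\rho_i$. The key observation is that $\defset_i$ is designed so that, after applying the Chinese Remainder Theorem to the exponents (using $\gcd(n_i,n_j)=1$ for $i\neq j$ and $n=\prod n_i$), the $n_i$-th root-of-unity structure on each block is preserved, and the $\rho_i-1$ consecutive exponents $sb_i+l$ for $0\leq s\leq\rho_i-2$ with $\gcd(n_i,b_i)=1$ form a BCH-type run in the local code. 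Concretely, restricting $\code$ to $A_{i,j}$ gives a cyclic code of length $n_i$ whose defining set contains $\rho_i-1$ consecutive powers (in the appropriate local primitive root), so by the BCH bound (the $\gamma=0$ case of Theorem~\ref{thm:hartmannTzeng}) its distance is at least $\rho_i$. The strong orthogonality is then exactly the statement that the CRT isomorphism $[n-1]\to \mathbb{Z}/n_1\times\cdots\times\mathbb{Z}/n_t$ realizes each $\mathcal{A}_i$ as the fibers of the $i$-th coordinate projection, which is immediate once the CRT identification is set up; this is where I would invoke and extend Lemma~3.3 of~\cite{Tamo2016cyc}. One must also check $|A_{i,j}|=n_i \leq r_i+\rho_i+1 = (n_i-\rho_i+1)+\rho_i+1$, which holds trivially, so the parameters in Definition~\ref{def:rirhoiAvailability} are met.

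\medskip

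\noindent\textbf{Distance bound via Hartmann--Tzeng on the global code.} For the bound $d\geq\prod_{i=1}^t\rho_i$ I would argue by induction on $t$, or directly, that the union $\bigcup_{i=1}^t \defset_i$ already forces a large distance regardless of $\defset_g$. The idea: under the CRT identification, $\defset_i$ is a "slab" consisting of all exponents whose $i$-th CRT coordinate lies in a fixed arithmetic progression of length $\rho_i-1$ (generated by $b_i$, which is a unit mod $n_i$) and whose other coordinates are unrestricted. The intersection (in the multi-dimensional exponent space) over a chosen subset of coordinates, together with a Hartmann--Tzeng-style argument, gives a defining set containing a product/grid of $\prod_{i=1}^t(\rho_i-1)$ many "offsets" times the consecutive runs. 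More precisely, I expect the cleanest route is induction: peel off one factor $n_t$, view the code as a concatenation-type or product-type object, use that the local-$t$ code has distance $\geq \rho_t$ and the "outer" code (on the remaining $t-1$ factors) has distance $\geq \prod_{i=1}^{t-1}\rho_i$ by the induction hypothesis, and conclude the product distance $\prod_{i=1}^t\rho_i$ by a standard product-code distance argument. Alternatively, a single application of an iterated Hartmann--Tzeng bound with $z_1=n_1, z_2=n_1n_2,\dots$ (each coprime to $n$ by the coprimality hypotheses) should directly exhibit a $\delta$–$\gamma$ pattern with $\delta+\gamma$ telescoping to $\prod\rho_i$.

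\medskip

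\noindent\textbf{Main obstacle.} The routine parts are the parameter checks and the CRT bookkeeping. The technical heart is making the distance argument rigorous: the Hartmann--Tzeng bound as stated in Theorem~\ref{thm:hartmannTzeng} handles only a two-parameter pattern $\{\alpha^{u+s_1z_1+s_2z_2}\}$, so either I must iterate it carefully (checking at each stage that the relevant step sizes remain coprime to $n$ and that the accumulated set of exponents actually sits inside $\bigcup_i\defset_i$), or I must replace it by a clean product-code distance lemma and justify that $\code$ contains, or maps onto, the relevant product code. I expect the product-code viewpoint to be the more transparent one, and the bijection of Definition~\ref{def:strongOrthogonality} is precisely the tool that lets me identify the local structure with lines of a $t$-dimensional product code, so the two halves of the theorem share the same underlying combinatorial setup.
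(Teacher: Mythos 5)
Your proposal follows essentially the same route as the paper: locality of each block via the BCH bound on the length-$n_i$ restricted cyclic code, strong orthogonality via the CRT identification of $[n-1]$ with $\mathbb{Z}/n_1\times\cdots\times\mathbb{Z}/n_t$, and the distance bound from the resulting $t$-dimensional product-code structure (the paper likewise dismisses the Hartmann--Tzeng route for general $t$, using it only as a supplementary check for $t=2$). The only step you elide is the justification that puncturing $\code$ down to a block really yields a cyclic code with the expected defining set; the paper does this work in Lemma~\ref{lem:cyclicLocality} by shortening the dual code, exhibiting $\rho_i-1$ explicit independent codewords, and invoking puncturing/shortening duality.
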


The proof follows from Lemma~\ref{lem:cyclicLocality} and is given at the end of the section. We first show locality properties of the construction given in Definition~\ref{def:codeconstruction} without availability, \emph{i.e.}, for $t=1$. The following result was already shown in~\cite{Chen2018}, for completeness we include a slightly different proof.
\begin{lemma}\label{lem:cyclicLocality}
  The cyclic code $\code$ with defining set $\defset_i$ as in Definition~\ref{def:codeconstruction} has $(n_i-\rho_i+1,\rho_i)$-locality.
\end{lemma}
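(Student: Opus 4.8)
The plan is to show that the cyclic code $\code$ with defining set $\defset_i$ (taking $t=1$, so $\defset = \defset_i$) restricted to each repair set $A_{i,j}$ of the natural partition $\mathcal{A}_i$ is itself a cyclic code whose defining set forces distance at least $\rho_i$. First I would fix the partition: by the condition $\gcd(n_i, n/n_i) = 1$ (which follows from $\gcd(n_a,n_b)=1$ for all $a \ne b$), the residue classes modulo $n_i$ partition $[n-1]$ into $n/n_i$ sets of size $n_i$, and I would take $A_{i,j} = \{ x \in [n-1] : x \equiv j \pmod{n/n_i} \}$ — equivalently, the coordinates whose indices lie in a fixed coset. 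Shifting cyclically, these $n/n_i$ sets are all cyclic shifts of one another, so it suffices to analyze one of them, say the one indexed by multiples of $n/n_i$.

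Next I would identify $\code|_{A_{i,j}}$ with a cyclic code of length $n_i$. The key observation is that $\beta = \alpha^{n/n_i}$ is a primitive $n_i$-th root of unity, and the restriction-and-shortening map sends the ambient code to a cyclic code over $\F_q$ of length $n_i$ in the variable $\beta$. I would then chase which powers of $\beta$ lie in the induced defining set: an element $\alpha^{jn_i + sb_i + l}$ of $\defset_i$ contributes a zero at $\beta^m$ precisely when $m (n/n_i) \equiv j n_i + s b_i + l \pmod n$ for some admissible $j,s$. Because $j$ ranges over a complete residue system, for each fixed $s$ there is exactly one such $m$, and as $s$ runs over $0,\dots,\rho_i-2$ these values of $m$ form an arithmetic progression with common difference $b_i \cdot (n/n_i)^{-1} \bmod n_i$, which is coprime to $n_i$ since $\gcd(b_i,n_i)=1$. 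Hence the restricted code is a cyclic code of length $n_i$ whose defining set contains $\rho_i - 1$ consecutive powers of $\beta$ (after the change of root), so by the BCH bound its minimum distance is at least $\rho_i$. Since $|A_{i,j}| = n_i = (n_i - \rho_i + 1) + \rho_i - 1$, this is exactly $(n_i-\rho_i+1,\rho_i)$-locality in the sense of Definition~\ref{def:rrhoLocality}.

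The main obstacle is the bookkeeping in the reduction to a length-$n_i$ cyclic code: one must verify carefully that restricting a cyclic code of length $n$ to a coset of the subgroup of order $n_i$ (i.e.\ puncturing to an arithmetic progression of positions with step $n/n_i$) yields a cyclic code of length $n_i$, and that its defining set is obtained from $\defset$ by the dual operation of intersecting-and-relabelling the exponents modulo $n_i$ — this is where the coprimality hypotheses $\gcd(n_i,n_j)=1$ and $\gcd(n_i,b_i)=1$ all get used. A clean way to handle this is to work with the evaluation/trace description: a codeword of $\code$ corresponds to a function on the $n$-th roots of unity whose "spectrum" is supported outside $\defset$, and restricting to the coset $\alpha^{a} \langle \beta \rangle$ simply folds the spectrum modulo $n_i$; one then reads off that the folded spectrum vanishes on a block of $\rho_i-1$ consecutive exponents of $\beta$. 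Once this structural fact is in place, the distance bound is immediate from Theorem~\ref{thm:hartmannTzeng} with $\gamma = 0$ (the BCH special case), and the lemma follows.
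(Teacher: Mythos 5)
Your overall strategy is sound and lands on the same core fact as the paper: the restriction of $\code$ to a coset $\{j+v\,n/n_i,\ 0\le v\le n_i-1\}$ is (contained in) a cyclic code of length $n_i$ whose defining set, written in powers of $\beta=\alpha^{n/n_i}$, is an arithmetic progression of $\rho_i-1$ elements with common difference coprime to $n_i$, whence the BCH bound gives distance $\geq\rho_i$. The mechanics differ from the paper's, though: the paper passes to the dual code, shortens it on the complement of $\{z\,n/n_i\}$, exhibits $\rho_i-1$ explicit codewords of the shortened dual to pin down its dimension and defining set, and then invokes the puncturing/shortening duality, whereas you fold the spectrum of the primal code modulo $n_i$ directly. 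Your route avoids the explicit codeword computation, at the price of having to justify the folding identity carefully; the paper's route is more hands-on but every step is an elementary verification.

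One slip needs fixing. In your middle paragraph you compute the induced defining set as the intersection $\defset_i\cap\langle\beta\rangle$, declaring $\beta^m$ a zero of the restricted code precisely when $m(n/n_i)\equiv jn_i+sb_i+l\pmod n$ for some admissible $j,s$. That is not the right criterion: the defining set of a punctured cyclic code is not the intersection of $\defset$ with the subgroup $\langle\beta\rangle$ (a single zero $\alpha^u\in\defset\cap\langle\beta\rangle$ whose residue class modulo $n_i$ is not fully contained in $\defset$ contributes nothing after restriction). The correct criterion is the folding you state at the end: $\beta^v$ is a zero of the restricted code when the entire residue class of $v$ modulo $n_i$ lies in $\defset_i$. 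Since for fixed $s$ the exponents $jn_i+sb_i+l$, $0\le j\le n/n_i-1$, exhaust exactly the class of $sb_i+l$ modulo $n_i$, the folded defining set is $\{\beta^{sb_i+l},\ 0\le s\le\rho_i-2\}$, an arithmetic progression with common difference $b_i$ --- not $b_i\cdot(n/n_i)^{-1}$ as your intersection computation gives. Your conclusion happens to survive because multiplication by the unit $(n/n_i)^{-1}$ sends one coprime-step progression to another, so the BCH/Hartmann--Tzeng bound applies either way; but the ``precisely when'' claim is false in general and the folding argument must be the one actually carried out. With that substitution the proof is complete.
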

\begin{proof}
  We show that puncturing the code in all except~$n_i$ positions gives a cyclic code of length $n_i$ and distance $\rho_i$. The defining set of the dual code $\code^\perp$ is given by $\defset_i^\perp = \{\alpha^i, 0 \leq z \leq n-1\} \backslash \defset_i$. Let $\nu=\frac{n}{n_i}$ and $\mathcal{S}= [n]\backslash \{z\nu, 0\leq z \leq n_i-1\}$. Define shortening in a position $z$ as the removal of all codewords with $c_z \neq 0$ from the codebook. We start by showing that the code $\code_S^\perp$, which is $\code^\perp$ shortened in $\mathcal{S}$, is of dimension $\rho_i-1$. Under the map $y \mapsto x^\nu$ each codeword of the shortened code can be written as
  \begin{equation*}
    c_S^\perp(y) = c_0 + c_1 y + c_2 y^2 + \ldots + c_{n_i-1} y^{n_i-1} \mod y^{n_i}-1 .
  \end{equation*}
  and the defining set of the dual code is given by
  \begin{align*}
    \defset_{i}^\perp &= \{\alpha^{z\nu} , 0\leq z \leq n-1\} \backslash \{ \alpha^{(zn_i+sb_i +l)\nu} , 0\leq z \leq\nu-1, 0\leq s \leq \rho_i-2 \} \\
                             &\stackrel{(a)}{=} \{\alpha^{z\nu} , 0\leq z \leq n_i-1\} \backslash  \{ \alpha^{(sb_i+l)\nu} , 0 \leq s \leq \rho_i-2\} \\
                             &=  \{\alpha^{(zb_i+l)\nu} , \rho_i-1 \leq z \leq n_i-1\} ,
  \end{align*}
  where $(a)$ holds because $\alpha^\nu$ is an $n_i$-th root of unity.
  Since the defining set of a code is preserved by shortening, the shortened code~$\code_S^\perp$ contains the polynomial $c_S^\perp(y)$ if and only if~$c_S^\perp (\beta) = 0 \; \forall \, \beta \in \defset_{i,S}^\perp$. Let
  \begin{equation}\label{eq:cishortened}
    c_j = \alpha^{j(\gamma b_i-l)\nu},  n_i-\rho_i +2 \leq \gamma \leq n_i
  \end{equation}
  then  
  \begin{align*}
    c_S^\perp (\alpha^{(zb_i+l)\nu}) &= \sum_{j=0}^{n_i-1} c_j \left(\alpha^{(zb_i+l)\nu}\right)^j \\
                              &= \sum_{j=0}^{n_i-1} \alpha^{j(\gamma b_i-l)\nu} \alpha^{j(zb_i+l)\nu } \\
                              &= \sum_{j=0}^{n_i-1} \left(\left(\alpha^\nu\right)^{(\gamma+z)b_i}\right)^j \\
                              &\stackrel{(a)}{=} \left\{
                                \begin{array}{ll}
                                  n_i \!\!\mod p,\;\; \;& \mathrm{if} \; n_i|(\gamma+z)b_i \\
                                  0,& \mathrm{else}
                                \end{array} \right. ,
  \end{align*}
  where $(a)$ holds because $\order(\alpha^\nu) = n_i$. Since $\gcd(n_i,b_i) = 1$, the condition $n_i|(\gamma+z)b_i$ is equivalent to $n_i|(\gamma+z)$. Observe that
  \begin{equation*}
\{\gamma+z , n_i-\rho_i+2 \leq \gamma \leq n_i, \rho_i-1 \leq z \leq n_i-1\} = [n_i+1, 2n_i -1]
\end{equation*}
 doesn't contain a multiple of $n_i$, so $c_S^\perp \in \code_S^\perp$.\\
  Clearly the $c_j$ of (\ref{eq:cishortened}) give $\rho_i-1$ independent codewords. Since the code $\code_S^\perp$ is of length $n_i$, has a defining set of cardinality~$|\defset_{i,S}^\perp |= r$ and dimension $\geq \rho_i-1$, it is the cyclic code generated by
  \begin{equation*}
    g_S^\perp(y) = \prod_{\beta \in \defset_{i,S}^\perp} (y-\beta)
  \end{equation*}
  in $\F_q[y]/[y^{n_i}-1]$ and its parity-check polynomial is given by
  \begin{equation*}
    h_S^\perp(y) = \prod_{\beta \in \defset_{i,S}} (y-\beta) ,
  \end{equation*}
  where~$\defset_{i,S}$ is $\defset_i$ under the map $y \mapsto x^\nu$.\\
  It is well known that the dual of a punctured code is the dual code shortened in the same positions. It follows that the local code, \emph{i.e.}, $\code$ punctured in the positions $\mathcal{S}$, is equivalent to the code generated by $h_S^\perp(y)$ and the $(n_i-\rho_i+1,\rho_i)$-locality follows by the BCH bound.
  \qed
\end{proof}

Using this result we now give the proof of Theorem~\ref{thm:cyclicAvailability}.
\begin{proof}[Theorem~\ref{thm:cyclicAvailability}]
  By Lemma~\ref{lem:cyclicLocality} and the fact that $\defset_i \subseteq \defset$ it follows that $\code$ has locality $(n_i-\rho_i+1, \rho_i), \; 1\leq i \leq t$. To fulfill the definition of availability, we show that the partitioning of the code is strongly orthogonal as in Definition~\ref{def:strongOrthogonality}. By Lemma~\ref{lem:cyclicLocality} the sets $A_{i,j}$ of partition $\mathcal{A}_i$ are given by
  \begin{equation*}
    A_{i,j} = \left\{j + v \frac{n}{n_i} , 0\leq v \leq n_i-1\right\}
  \end{equation*}
  with $j\in [n_i-1]$. Since $n_l \left| \frac{n}{n_i}\right. \; , \forall \, l \neq i$, mapping $a \in A_{i,j}$ by
  \begin{align*}
    a &\mapsto \{a \!\! \mod n_1,...,a\!\! \mod n_i, ..., a\!\! \mod n_t\}\\
    &= \{j\!\!  \mod n_1,..., a \!\! \mod n_i ,...,j\!\! \mod n_t\}
  \end{align*}
  fulfills the definition of strong orthogonality, as $j$ is constant for all $a \in A_{i,j}$.\\
  Noticing that any code that fulfills strong orthogonality is a $t$-dimensional product code gives the bound on the distance.
  \qed
\end{proof}
Note that the bound on the distance given in Theorem~\ref{thm:cyclicAvailability} is independent of the choice of the set $\defset_g$. The purpose of this set is increasing the global distance of the code as will be discussed later. It follows from the product code like structure given by the strong orthogonality of the partitions, that for $\defset_g = \emptyset$ the dimension of a code~$\code$ as in Definition~\ref{def:codeconstruction} is 
\begin{equation*}
  k = \prod_{i=1}^t (n_i-\rho_i+1 ).
\end{equation*}
For $t=2$ the bound of $d\geq4$ on the distance can also be shown by the Hartmann-Tzeng bound (see Theorem~\ref{thm:hartmannTzeng}).

\begin{lemma}\label{lem:sij}
  Let $s_{i,j} = jq-ip \mod pq$. Then $s_{i,j} = - s_{q-i,p-j}$.
\end{lemma}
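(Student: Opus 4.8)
The plan is to prove this by direct substitution into the definition of $s_{i,j}$, since the claimed identity is essentially an arithmetic rearrangement modulo $pq$. First I would write out $s_{q-i,p-j}$ using the defining formula: $s_{q-i,p-j} = (p-j)q - (q-i)p \bmod pq$. Expanding the products gives $pq - jq - pq + ip = ip - jq$, where the two copies of $pq$ cancel exactly (not merely modulo $pq$), so in fact $s_{q-i,p-j} \equiv ip - jq \pmod{pq}$.

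Next I would observe that $ip - jq = -(jq - ip)$, so $s_{q-i,p-j} \equiv -(jq - ip) \equiv -s_{i,j} \pmod{pq}$, which is the desired identity. The only point that needs a word of care is the interpretation of the indices and of the residue: the statement should be read with $i$ and $j$ reduced appropriately so that $q-i$ and $p-j$ land in the intended index ranges (and, if $i=0$ or $j=0$, one may prefer to take representatives in $\{0,\dots,pq-1\}$ or in a symmetric range); in either convention the congruence $s_{q-i,p-j} = -s_{i,j}$ holds as elements of $\mathbb{Z}/pq$.

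There is no real obstacle here — the lemma is a one-line computation whose content is just that $(p-j)q-(q-i)p$ and $-(jq-ip)$ differ by the integer $pq - pq = 0$. The main thing to be explicit about is simply that the ``$\bmod pq$'' is being used consistently on both sides, so that ``$=$'' in the statement is equality in $\mathbb{Z}/pq$ (or, equivalently, equality of the chosen canonical representatives).
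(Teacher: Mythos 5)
Your computation is correct and is essentially the same as the paper's: both just expand $(p-j)q-(q-i)p$ and observe that it reduces to $ip-jq=-(jq-ip)$ modulo $pq$ (the paper phrases this as showing $s_{i,j}-(-s_{q-i,p-j})=0$). Your added remark about interpreting the equality in $\mathbb{Z}/pq$ is a reasonable clarification but does not change the argument.
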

\begin{proof}
  Subtracting gives
  \begin{align*}
    s_{i,j} - ( -s_{q-i,p-j}) &= jq - i p + ((p-j) q- (q-i)p) \mod pq\\
    &= jq - i p + (pq - jq - qp + ip) \mod pq = 0
  \end{align*}
  \qed
\end{proof}

\begin{lemma}\label{lem:basiccode}
  For $t=2$ the lower bound of $d\geq 4$ on the distance of a code as in Definition~\ref{def:codeconstruction}, with $\rho_1=\rho_2=2$, is given by the Hartmann-Tzeng bound.
\end{lemma}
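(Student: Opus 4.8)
The plan is to locate inside the defining set $\defset$ a configuration of the shape required by Theorem~\ref{thm:hartmannTzeng} with $\delta+\gamma=4$; since $\defset\supseteq\defset_1\cup\defset_2$ it suffices to find one inside $\defset_1\cup\defset_2$, where $\defset_i$ is as in Definition~\ref{def:codeconstruction}. Write $p=n_1$ and $q=n_2$, so that $n=pq$ with $\gcd(p,q)=1$, and use the CRT isomorphism $\mathbb{Z}/n\cong\mathbb{Z}/p\times\mathbb{Z}/q$ (the product structure underlying Definition~\ref{def:strongOrthogonality}) to identify an exponent $a$ with the pair $(a\bmod p, a\bmod q)$. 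Since $\rho_1=\rho_2=2$, only the term $s=0$ survives in each $\defset_i$, and one checks immediately that $\defset_1=\{a:a\equiv l\pmod{p}\}$ and $\defset_2=\{a:a\equiv l\pmod{q}\}$; that is, in CRT coordinates $\defset_1$ is the ``line'' with first coordinate $l_1:=l\bmod p$ and $\defset_2$ the line with second coordinate $l_2:=l\bmod q$, the two meeting in the single exponent $l$. (One may as well take $l=0$, since replacing $\defset$ by $\alpha^{-l}\defset$ only rescales codewords by a fixed diagonal pattern and preserves Hamming weights.)

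The constraint to record next is that any $z$ with $\gcd(z,n)=1$ has both CRT coordinates invertible, hence nonzero, so translation by such a $z$ moves $\defset_1$ entirely off $\defset_1$ and $\defset_2$ entirely off $\defset_2$. Thus in a $2\times2$ Hartmann-Tzeng array $u, u+z_1, u+z_2, u+z_1+z_2$ with $z_1,z_2$ coprime to $n$, the two opposite corners $u, u+z_1+z_2$ must lie on one of the two lines and $u+z_1, u+z_2$ on the other. When $n_1,n_2\geq3$ I would accordingly take $z_1=1$ (CRT coordinates $(1,1)$), let $z_2$ be the element with CRT coordinates $(-1,1)$, and let $u$ have CRT coordinates $(l_1, l_2-1)$; a direct check then gives $u,u+z_1+z_2\in\defset_1$ and $u+z_1,u+z_2\in\defset_2$, shows that $z_1$ and $z_2$ are units modulo $n$, and (using $n_1,n_2\geq3$) that the four exponents are pairwise distinct. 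Theorem~\ref{thm:hartmannTzeng} with $\delta=3$ and $\gamma=1$ then yields $d\geq4$. For $l=0$ this array is the ``parallelogram'' $\{x,-x,y,-y\}$ with $x\in\defset_2$ and $y\in\defset_1$, whose existence is exactly the reflection symmetry $s_{i,j}=-s_{q-i,p-j}$ of Lemma~\ref{lem:sij}, i.e.\ the fact that $\defset_1$ and $\defset_2$ are then closed under negation.

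The one genuine obstacle is the case $\min(n_1,n_2)=2$, say $p=2$, so that $q=n_2$ is odd and $n=2q$: there the construction above degenerates ($z_1=z_2$, and the array collapses to three points), and one must instead fall back on the BCH bound, i.e.\ the $\gamma=0$ case of Theorem~\ref{thm:hartmannTzeng}. Pick the unique element $w$ of $\defset_2$ whose parity differs from $l_1$; then $w-1, w, w+1$ are three cyclically consecutive exponents with $w\in\defset_2$ and $w\pm1\in\defset_1$, while the step $1$ is a unit modulo $n=2q$, so $\delta=4$, $\gamma=0$ again gives $d\geq4$. Combining the two cases proves the lemma; apart from this case split, the only fiddly points are carrying the shift $l$ through and verifying distinctness of the four exponents, both routine.
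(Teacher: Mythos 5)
Your proof is correct and follows essentially the same strategy as the paper's: exhibit a $2\times 2$ Hartmann--Tzeng configuration $u,\,u+z_1,\,u+z_2,\,u+z_1+z_2$ with two opposite corners in $\defset_1$ and the other two in $\defset_2$; your choice $z_1=(1,1)$, $z_2=(-1,1)$ in CRT coordinates is an instance of the paper's family $u=in_1$, $z_1=s_{i,j_1}$, $z_2=s_{i,n_1-j_1}$. You are in fact somewhat more careful than the paper, both in verifying $\gcd(z_i,n)=1$ via the CRT coordinates (which remains valid for composite $n_i$, where the paper's ``$n_1$ and $n_2$ are the only divisors of $n$'' argument is imprecise) and in treating the degenerate case $\min(n_1,n_2)=2$, where the paper's configuration with $j_2=n_1-j_1$ likewise collapses to three points.
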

\begin{proof}
Assume $l=0$. For the Hartmann-Tzeng bound, as given in Theorem~\ref{thm:hartmannTzeng}, let $u=i n_1, z_1 = s_{i,j_1}$ and $z_2 = s_{i,j_2}$, where $0<i<n_2$ and $0<j_1<n_1$ and $j_2 =n_1- j_1$. By Definition~\ref{def:codeconstruction} it holds that $u\in \mathcal{D}_{1}$ and $u+z_1, u+z_2 \in \mathcal{D}_{2}$. Lemma~\ref{lem:sij} gives
  \begin{align*}
    u+z_1+z_2 &= u+ s_{i,j_1} + s_{i,j_2} \mod n \\
              &= u+ s_{i,j_1} + s_{i,n_1-j_1} \mod n \\
              &= u+ s_{i,j_1} - s_{n_2-i,j_1} \mod n\\
              &= in_1 + j_1n_2-in_1 -j_1n_2+(n_2-i)n_1 \mod n\\
              &= (n_2-i)n_1 \mod n \quad \in \mathcal{D}_{1}.
  \end{align*}
  It holds that $n_1,n_2\nmid s_{i,j}$ for any $i \mod n_2 \neq 0$ and $j \mod n_1 \neq 0$. Since $n_1$ and $n_2$ are the only divisors of $n$ it follows that $\gcd(n,z_1) = \gcd(n,z_2) = 1$. The bound on the distance is then given by  Hartmann-Tzeng bound. From the fact that the Hartmann-Tzeng bound is invariant to a cyclic shift of the defining sets, it follows that the bound is also valid for the case of $l>0$.
\qed
\end{proof}

\section{Bound on the Dimension}

In this section we give a new bound on the dimension of LRCs with strong orthogonality. The approach of the bound is related to the Cadambe-Mazumdar bound~\cite{Cadambe2013} in that it utilizes the dependencies within the local groups to deduce an upper bound on the dimension. Similar to the bound given in \eqref{eq:SingletonLikeBound} our bound is independent of the field size.

\begin{theorem}
  \label{thm:bound2}
  Let $n_1,...,n_t$ be integers with $n_i\geq 2$ and $\gcd(n_i,n_j) = 1$. Let $\rho_1 =...=\rho_t = \rho$ be integers with $2 \leq \rho \leq \min\{n_1,...,n_t\}$. Let $\code$ be a code of length $n$ with strong $(n_i-\rho+1,\rho)$-$t$-availability and partitioning given by~$\mathcal{A}_{i}, 1\leq i \leq t$.
  Let $\xi \in [t]$ be the smallest integer such that
  \begin{equation}\label{eq:pxi}
   n_{\xi+1} > \left\lfloor\sqrt[t-\xi]{\frac{d-1}{\prod_{i=1}^{\xi} n_i}}\right\rfloor.
  \end{equation}
  Then the code dimension is upper bounded by
  \begin{align}
    \label{eq:boundmulticube}
    k &\leq \prod_{i=1}^{t}(n_i-\rho+1) -  \left( \left\lfloor \left(\frac{d-1}{ \prod_{i=1}^{\xi} n_i} \right)^{\frac{1}{t-\xi}}\right\rfloor -\rho-1 \right)^{t-\xi} \cdot \prod_{i=1}^{\xi} (n_i-\rho+1 ).
  \end{align}
\end{theorem}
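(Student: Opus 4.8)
The plan is to adapt the Cadambe--Mazumdar style argument to the product-code structure imposed by strong orthogonality. The key observation is that, by Definition~\ref{def:strongOrthogonality}, the coordinates of $\code$ are indexed by $\mathbb{Z}/n_1\times\cdots\times\mathbb{Z}/n_t$, and for each $i$ the partition $\mathcal{A}_i$ groups coordinates into ``lines'' in direction $i$, each line carrying a code of distance at least $\rho$; hence each such line of length $n_i$ has at most $n_i-\rho+1$ independent symbols. This gives the first term $\prod_{i=1}^t(n_i-\rho+1)$ as a crude upper bound on $k$: project onto any ``box'' of the form $\prod_i S_i$ with $|S_i|=n_i-\rho+1$ and argue that the restriction of $\code$ to this box already determines the whole codeword, so $k$ cannot exceed the size of the box. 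The substance of the theorem is the \emph{subtracted} term, which says that inside a suitably chosen sub-box the information symbols are \emph{not} free: the minimum-distance constraint $d$ forces extra dependencies.

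The key steps, in order, are as follows. First I would fix the integer $\xi$ defined by~\eqref{eq:pxi} and set $m=\big\lfloor((d-1)/\prod_{i=1}^\xi n_i)^{1/(t-\xi)}\big\rfloor$, so that $m\le n_{\xi+1}\le\cdots\le n_t$ (using that $\xi$ is the \emph{smallest} such integer, together with the WLOG ordering $n_1\le\cdots\le n_t$ which one should state explicitly). Then $m^{t-\xi}\prod_{i=1}^\xi n_i\le d-1$. Second, I would carve out a sub-region $\Gamma=\big(\prod_{i=1}^{\xi}\mathbb{Z}/n_i\big)\times\big(\prod_{i=\xi+1}^{t}T_i\big)$ where $|T_i|=m$ for $i>\xi$; by the previous paragraph $|\Gamma|\le d-1$, so $\Gamma$ cannot be the support of any nonzero codeword, i.e.\ $\code|_\Gamma$ contains no nonzero codeword-restriction that vanishes outside $\Gamma$. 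Third --- and this is the core --- I would count the number of independent \emph{local parity checks} of $\code$ whose support lies entirely inside $\Gamma$. For each of the first $\xi$ directions the lines are full (length $n_i$), so each such line through $\Gamma$ supplies $\rho-1$ checks confined to $\Gamma$; for each of the directions $i>\xi$, a line intersected with $\Gamma$ has length $m$, and if $m\ge\rho$ (which one gets from the hypothesis $m\le n_{\xi+1}$ only after also invoking $m>\rho-1$, so this inequality must be checked/assumed) it still contributes $m-\rho+1$ dependencies coming from the distance-$\rho$ local code restricted there. Counting these carefully and subtracting overlaps, the number of independent relations supported in $\Gamma$ is at least $|\Gamma|-\big(m-\rho+1\big)^{t-\xi}\prod_{i=1}^\xi(n_i-\rho+1)$; combined with the fact that these relations are genuine constraints on the information symbols (because no codeword is supported in $\Gamma$, the restriction map is ``lossless'' there and every dependency among columns of the box reduces $k$), this yields exactly~\eqref{eq:boundmulticube}.

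The main obstacle I anticipate is the third step: making the counting of $\Gamma$-confined dependencies rigorous, in particular showing that the $\prod_{i=1}^\xi(\rho-1)\cdot(\text{something})$ checks from different directions are jointly independent and that their total rank is at least $|\Gamma|-(m-\rho+1)^{t-\xi}\prod_{i=1}^\xi(n_i-\rho+1)$. The clean way to do this is to identify $\code|_\Gamma$ (as a punctured/shortened code) with a $t$-fold product of the local codes restricted to the chosen index sets --- the tensor/product structure is exactly what strong orthogonality buys, by the same argument as in the proof of Theorem~\ref{thm:cyclicAvailability} --- so that $\dim\code|_\Gamma = \prod_{i=1}^\xi(n_i-\rho+1)\cdot\prod_{i=\xi+1}^t(m-\rho+1) = (m-\rho+1)^{t-\xi}\prod_{i=1}^\xi(n_i-\rho+1)$, and then use $k\le\dim\code|_\Gamma + \big(\text{number of coordinates outside }\Gamma\text{ in the full box}\big)$, i.e.\ $k\le (m-\rho+1)^{t-\xi}\prod_{i=1}^\xi(n_i-\rho+1) + \big(\prod_{i=1}^t(n_i-\rho+1) - (m-\rho+1)^{t-\xi}\prod_{i=1}^\xi(n_i-\rho+1)\big)$... which is vacuous, so the real argument must instead bound $k$ from above by the \emph{full box minus the gain from $\Gamma$}: since $\Gamma$ supports no codeword, restricting $\code$ to the complement of $\Gamma$ inside the full box is still injective, giving $k\le \prod_{i=1}^t(n_i-\rho+1) - |\Gamma| + \dim(\code|_\Gamma)$, and then substituting $|\Gamma|=m^{t-\xi}\prod_{i=1}^\xi n_i$ and $\dim(\code|_\Gamma)=(m-\rho+1)^{t-\xi}\prod_{i=1}^\xi(n_i-\rho+1)$... which gives a different (and possibly weaker) bound, so reconciling these two estimates and getting precisely the stated exponent $t-\xi$ on $(m-\rho-1)$ rather than $m-\rho+1$ --- note the ``$-\rho-1$'' in~\eqref{eq:boundmulticube}, which suggests a further loss of $2$ per direction that I would need to track down (likely from the fact that within a length-$m$ segment one loses not $\rho-1$ but $\rho+1$ degrees of freedom once boundary effects of the cyclic local code are accounted for) --- is where the delicate bookkeeping lies.
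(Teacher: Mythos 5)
Your overall strategy coincides with the paper's: index the coordinates by the grid $\mathbb{Z}/n_1\times\cdots\times\mathbb{Z}/n_t$ (with the implicit ordering $n_1\leq\cdots\leq n_t$, which you rightly say should be stated), set $v=\big\lfloor\big((d-1)/\prod_{i\leq\xi}n_i\big)^{1/(t-\xi)}\big\rfloor$, and puncture a hyperrectangle $\Gamma$ that is full in the first $\xi$ directions and has side $v$ in the remaining ones, so that $|\Gamma|\leq d-1$ and hence puncturing on $\Gamma$ does not decrease $k$ (your "no nonzero codeword is supported on $\Gamma$"). The genuine gap is that you never complete the counting step: you propose two different ways to finish, discard the first as vacuous, observe that the second, $k\leq\prod_i(n_i-\rho+1)-|\Gamma|+\dim\code|_\Gamma$, yields a different and weaker expression, and leave the reconciliation open. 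The paper's closing step is neither of these. It bounds the dimension of the punctured code by counting, per line, how many of its at most $n_i-\rho+1$ linearly independent generator-matrix columns survive: a line meeting $\Gamma$ in $\min\{v,n_i\}$ positions loses only $\min\{v,n_i\}-(\rho-1)$ of them, because the local code on that line has distance $\rho$ and so the first $\rho-1$ punctures on any line are absorbed by the local redundancy. Multiplying over the $t$ directions, puncturing $\Gamma$ removes $(v-\rho+1)^{t-\xi}\prod_{i=1}^{\xi}(n_i-\rho+1)$ free positions from the crude count $\prod_{i=1}^{t}(n_i-\rho+1)$, which is exactly \eqref{eq:boundmulticube}. (The paper asserts this multiplicativity of the surviving free positions without a detailed rank argument, so your worry about the joint independence of the local constraints is legitimate; but that is the intended argument, not a rank computation of $\code|_\Gamma$.)

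The "$-\rho-1$" that you spent the last part of your proposal trying to explain is a typographical error for $-(\rho-1)$, i.e.\ $-\rho+1$. This is confirmed both by the proof's own wording ("subtracting $\rho-1$ from each side length") and by every entry of Table~\ref{tab:parameters}: for instance, for $n_1=3$, $n_2=5$, $\rho=2$, $d\geq 11$ one gets $\xi=1$, $v=3$, and the subtracted term $(v-\rho+1)\cdot(n_1-\rho+1)=2\cdot 2=4$ reproduces the tabulated bound $k\leq 4$, whereas $(v-\rho-1)\cdot 2=0$ would give the useless $k\leq 8$. So the "further loss of $2$ per direction" you were hunting for does not exist, and no boundary effect of the cyclic local codes needs to be accounted for. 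Your other flagged side condition, that $v\geq\rho-1$ is needed for the subtracted volume to be meaningful, is a real edge case that the paper also glosses over.
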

\begin{proof}
  We regard the codeword as a $t$-dimensional rectangle (hyperrectangle) with side lengths $n_i, i\in [1,t]$. Each line in a dimension is a local group and therefore the corresponding columns of the generator matrix span a vector space of at most dimension $n_i-\rho+1$. Puncturing a code at $\leq d-1$ positions must not decrease its dimension. It follows that any upper bound on the dimension of a code $\code$ punctured in $d-1$ positions is also an upper bound on the dimension of $\code$. In the following, the code is punctured in positions that form a hyperrectangle that is as close as possible to a hypercube, \emph{i.e.}, each of its sides is of length $\min\{v,n_i \}$, where $v$ is the largest integer such that the volume $V$ (number of punctured positions) of the hyperrectangle is $V \leq d-1$. Let $\xi$ be the largest index such that $n_\xi \leq v$. Then $v$ is given by
  \begin{equation*}
    v = \left\lfloor\sqrt[t-\xi]{\frac{d-1}{\prod_{i=1}^{\xi}n_i}} \right\rfloor ,
  \end{equation*}
  since
  \begin{align*}
   V = v^{t-\xi} \cdot \prod_{i=1}^\xi n_i &= \left(\left\lfloor\sqrt[t-\xi]{\frac{d-1}{\prod_{i=1}^{\xi}n_i}} \right\rfloor \right)^{t-\xi}\cdot \prod_{i=1}^\xi n_i\\
   &\leq \left(\sqrt[t-\xi]{\frac{d-1}{\prod_{i=1}^{\xi}n_i}} \right)^{t-\xi}\cdot \prod_{i=1}^\xi n_i\\
   &= d-1 .                                          
  \end{align*}
  Considering the code given by the columns according to one line in a given dimension gives a code of distance $\rho$ and therefore puncturing the code in any $\rho-1$ position of a line does not decrease the dimension. Since any symbol is part of $t$ repair sets, each of these sets has to be punctured in at least $\rho$ positions to decrease the dimension. It follows that puncturing the code in the positions corresponding to a $t$-dimensional hypercube of side length $v'$ decreases the dimension by the volume of a $t$-dimensional hypercube of side length $v'-(\rho-1)$. If $v' > n_i$ for any $i \in [t]$, the punctured positions can no longer form a hypercube. In this case the punctured positions in these $\xi$ dimensions fill the complete lines, while the remaining dimensions form a $(t-\xi)$-dimensional hypercube of side length $v$. Equation~\eqref{eq:boundmulticube} follows from the volume obtained by subtracting $\rho-1$ from each side length, \emph{i.e.}, taking the linear dependencies in each line into consideration.  
  \qed
\end{proof}

For simplicity the bound is only given for $\rho_1=...=\rho_t=\rho$ but the generalization is straight forward by replacing $\rho$ with $\rho_i$ and taking the product instead of powering.
\begin{remark}\label{rem:remainder}
Note that in the bound as stated above, the code is only punctured in the positions according to the largest $t$-dimensional hyperrectangle possible, while it would be possible to further puncture positions according to a $t-1$-dimensional hyperrectangle, and so on. We refrain from including this in the bound as, in our opinion, the more complicated recursive structure required is not justified by the obtained improvement.
\end{remark}

\section{Examples}
In this section we give the parameters of some cyclic Tamo-Barg codes constructed as described in the previous sections and compare the achieved dimension to the bound for a selection of parameters.

\begin{example}
We design an $[n=15,k=6]$ LRC over $\F_{16}$ with strong $(\{2,4\},\{2,2\})$-$2$-availability. Setting $b_1=b_2=1$ and $l=0$ gives
\begin{equation*}
  \defset = \underbrace{\{\alpha^0,\alpha^3,\alpha^6,\alpha^9,\alpha^{12}\}}_{\defset_1} \cup \underbrace{\{\alpha^0,\alpha^5,\alpha^{10}\}}_{\defset_2} \cup \defset_g 
\end{equation*}
as the defining set of the code. 
With $\defset_g=\{\alpha^7,\alpha^8\}$ the BCH bound gives a lower bound of $d\geq 7$ on the distance of the code. Note that for this example it is not possible to achieve a higher bound on the distance with the Hartmann-Tzeng bound. 
The corresponding defining set can be visualized as follows:

\begin{equation*}
  \settowidth\mylen{100} 
  \begin{array}{P||P|P|P|P|P|P|P|P|P|P|P|P|P|P|P}
    \alpha^i &0&1&2&3&4&5&6&7&8&9&10&11&12&13&14\\ \hline
    \defset_1 &0&&&0&&&0&&&0&&&0&&\\
    \defset_2 &0&&&&&0&&&&&0&&&& \\
    \defset_g &&&&&&&&0&0&&&&&& \\ \hline
    \defset &0&&&0&&0&0&0&0&0&0&&0&
  \end{array}
\end{equation*}
The dimension of this code is $k = n-|\defset| = 6$. The bound of Theorem~\ref{thm:cyclicAvailability} for $d=7$ gives $k\leq 7$, however, when also considering the remainder of the floor operation in the bound, as explained in Remark~\ref{rem:remainder}, the bound gives $k\leq 6$, showing that this example is optimal.
\end{example}

\begin{example}
  In Example~5 of~\cite{Tamo2014} an $[n=12, k=4]$ LRC with strong $(\{2,3\},\{2,2\})$-$2$-availability is constructed over $\F_{13}$ and the distance of the code is lower bounded by $d\geq 6$. For the same parameters our construction with $b_1=b_2=1$ and $l=0$ gives
  \begin{equation*}
      \defset = \underbrace{\{\alpha^0,\alpha^3,\alpha^6,\alpha^9\}}_{\defset_1} \cup \underbrace{\{\alpha^0,\alpha^4,\alpha^{8}\}}_{\defset_2} \cup \defset_g .
  \end{equation*}
  With $\defset_g = \{\alpha^5,\alpha^7\}$ the BCH bound gives $d\geq 8$ for the distance of the cyclic code with dimension $k = n-|\defset| = 4$.
\end{example}

Table~\ref{tab:parameters} gives the parameters for codes defined as in Definition~\ref{def:codeconstruction}, where $l_1=l_2=0$ and $\rho_1=\rho_2=2$. The parameters for which the dimension of the obtained code is optimal for the given distance are highlighted. Note that this does not necessarily imply that the code is also optimal in terms of distance for a given dimension, as the bound can be the equal for different values of $d$. 
\begin{table}\centering
  \caption{Parameters of the code construction given in Definition~\ref{def:codeconstruction} with additional zeros given by $\defset_g$ added to the defining set, which are chosen optimal with respect to the Hartmann-Tzeng bound. Parameters for which the dimension of the respective code attains the bound for the given distance are indicated in green.}
\begin{tabular}{CCCCCCCC}
  n & n_1 & n_2 & D_g &  d & k & \mathrm{Theorem~\ref{thm:bound2}} \\ \hline
  \rowcolor{green!30} 15 & 3 & 5 & \{\alpha^4\} & \geq 5 & 7 & 7\\
  \rowcolor{green!30} 15 & 3 & 5 & \{\alpha^4,\alpha^7,\alpha^8,\alpha^{11}\} & \geq 11 & 4 & 4\\ 
  \rowcolor{green!30} 21 & 3 & 7 & \{\alpha^8\} & \geq 5 & 11 & 11\\
  21 & 3 & 7 & \{\alpha^4,\alpha^5\} & \geq 6 & 10 & 11\\
  \rowcolor{green!30} 21 & 3 & 7 & \{\alpha^8,\alpha^{10},\alpha^{11},\alpha^{13}\} & \geq 11 & 8 & 8\\
  \rowcolor{green!30} 51 & 3 & 17 & \{\alpha^{16}\} & \geq 5 & 31 & 31\\  
  51 & 3 & 17 & \{\alpha^{14},\alpha^{16}\} & \geq 6 & 30 & 31\\
  51 & 3 & 17 & \{\alpha^{10},\alpha^{11},\alpha^{13},\alpha^{14},\alpha^{16}\} & \geq 11 & 27 & 28\\
  35 & 5 & 7 & \{\alpha^4\} & \geq 4 & 23 & 24 \\
  35 & 5 & 7 & \{\alpha^4,\alpha^6\} & \geq 5 & 22 & 23 \\
  35 & 5 & 7 & \{\alpha^8,\alpha^9,\alpha^{11},\alpha^{12},\alpha^{13}\} & \geq 10 & 19 & 20 \\  
\end{tabular}
\label{tab:parameters}
\end{table}

\section{Conclusion}

In this work we gave a code construction with $(r,\rho)$-locality and availability based on the construction of cyclic Tamo-Barg LRCs~\cite{Tamo2014,Tamo2016cyc} and their generalization~\cite{Chen2018}. We introduce the notion of strongly orthogonal repair sets and show that the provided a construction has this property. We give a bound on the dimension of any LRC with strong orthogonality and show that for some parameters, the dimension achieved by our construction is optimal or close to optimal. 
\newpage

%
%
%
\bibliographystyle{splncs04}
\bibliography{paperWCC}

\end{document}